\newtheorem{theorem}{Theorem}
\begin{document}

%
%
%
\title{PRISM: Privacy-Aware Routing for Adaptive Cloud–Edge LLM Inference via Semantic Sketch Collaboration}

\title{PRISM: Privacy-Aware Routing for Adaptive Cloud--Edge LLM Inference via Semantic Sketch Collaboration}

\author{
Junfei Zhan\textsuperscript{\rm 1}\equalcontrib,
Haoxun Shen\textsuperscript{\rm 1}\equalcontrib,
Zheng Lin\textsuperscript{\rm 2},
Tengjiao He\textsuperscript{\rm 3}\thanks{Corresponding author.}
}

\affiliations{
\textsuperscript{\rm 1}Department of Electrical and Systems Engineering, University of Pennsylvania, Philadelphia, PA, USA\\
\textsuperscript{\rm 2}Department of Electrical and Electronic Engineering, University of Hong Kong, Hong Kong SAR, China\\
\textsuperscript{\rm 3}College of Information Science and Technology, Jinan University, Guangzhou, China\\
\{zjf2024, haoxun\}@seas.upenn.edu, linzheng@eee.hku.hk, htj2018@jnu.edu.cn
}

\maketitle

\begin{abstract}

Large Language Models (LLMs) demonstrate impressive capabilities in natural language understanding and generation, 
but incur high communication overhead and privacy risks in cloud deployments, while facing compute and memory constraints when confined to edge devices.
Cloud–edge inference has emerged as a promising paradigm for improving privacy in LLM services by retaining sensitive computations on local devices.
However, existing cloud–edge inference approaches apply uniform privacy protection without considering input sensitivity, resulting in unnecessary perturbation and degraded utility even for non-sensitive tokens. 
To address this limitation, we propose Privacy-aware Routing for Inference with Semantic Modulation (PRISM), a context-aware framework that dynamically balances privacy and inference quality. 
PRISM executes in four stages: (1) the edge device profiles entity-level sensitivity; (2) a soft gating module, also on the edge, selects an execution mode -cloud, edge, or collaboration; (3) for collaborative paths, the edge applies adaptive two-layer local differential privacy based on entity risks; and (4) the cloud LLM generates a semantic sketch from the perturbed prompt, which is then refined by the edge-side small language model (SLM) using local context.
Our results show that PRISM consistently achieves superior privacy-utility trade-offs in various scenarios, reducing energy consumption and latency to 40–50\% of baseline methods such as Uniform and Selective LDP, while maintaining high output quality under strong privacy constraints. 
These findings are validated through comprehensive evaluations involving realistic prompts, actual energy measurements, and heterogeneous cloud–edge model deployments.

\end{abstract}
%


\begin{small}
\begin{links}
  \link{Code}{https://github.com/Junfei-Z/PRISM}
\end{links}
\end{small}

%




\begin{figure}[htbp]
    \centering
    \includegraphics[width=0.90\linewidth]{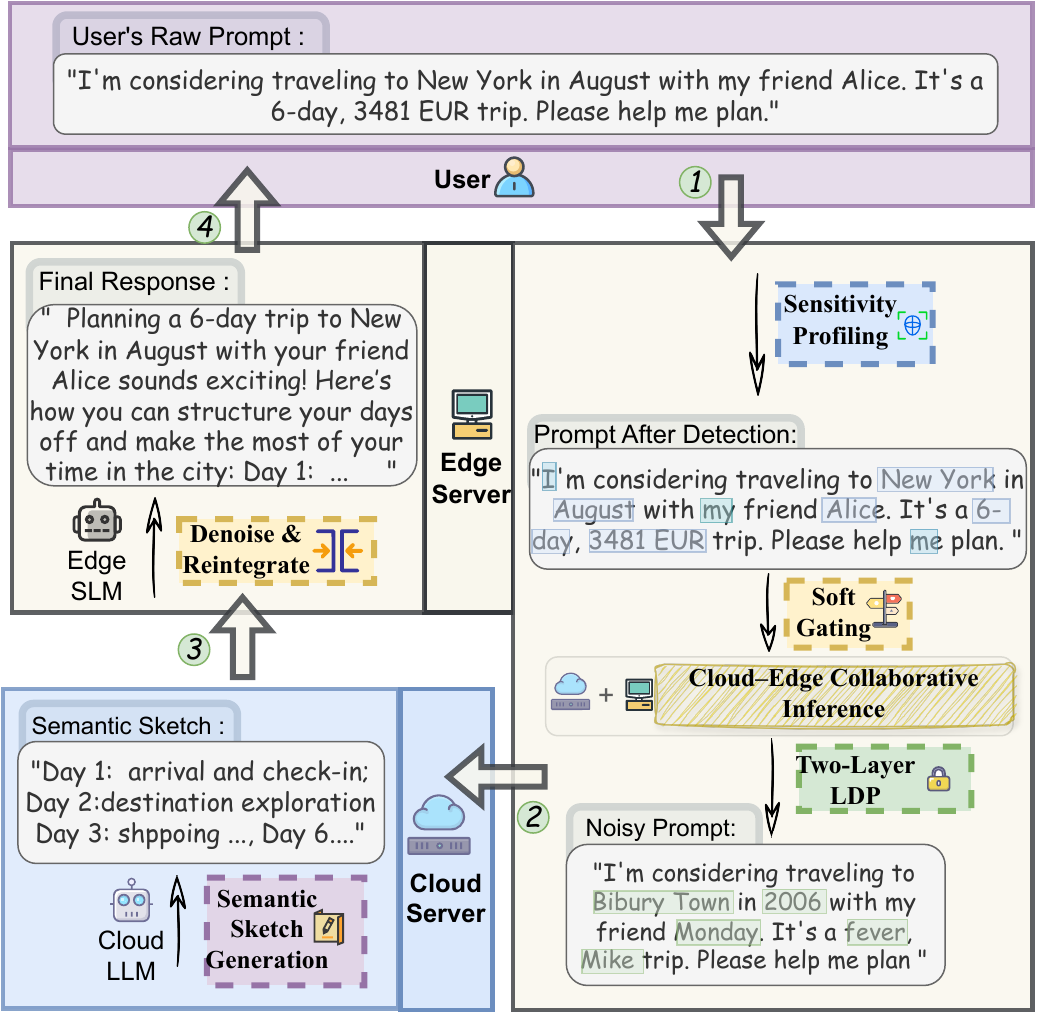}
    \caption{PRISM workflow example illustrating privacy-preserving prompt processing and transformation stages.}
    \label{fig:enter-label}
\end{figure}

\section{Introduction}
Large Language Models (LLMs) such as GPT‑4o~\cite{openai2024gpt4o}, LLaMA 3~\cite{grattafiori2024llama3}, and Qwen 3~\cite{yang2025qwen3} have achieved remarkable progress in natural language understanding and generation~\cite{cai2025graph}
. Their capabilities extend to diverse applications including code synthesis~\cite{chen2021evaluating}, multimodal reasoning~\cite{alayrac2022flamingo}, mathematical problem solving~\cite{drori2022neural}, and biomedical variant classification~\cite{li2025text2translation}. 
To run LLM-based services at scale, operators require large GPU clusters to support high-throughput inference workloads \cite{miao2023efficient_llm_serving,cai2025graphrepresentationbasedmodelpoisoning,linzheng3}. To this end, LLMs are typically deployed in the cloud, where they receive users’ full prompts and generate responses in real time. This cloud-based deployment model is driven by the massive size and computation demands of state-of-the-art LLMs, which exceed the resource limits of local devices.
%

%
%
%
%
%
Cloud-centric LLM deployment provides global accessibility and facilitates centralized model updates \cite{li2024llm_inference_serving,linzheng2}. However, it also introduces significant communication overhead and raises privacy risks due to the need to transmit full user prompts over the network \cite{linzheng2}.
These challenges are especially pronounced in privacy-critical domains such as medical, finance, and personalized services, where user prompts often contain sensitive personal information.
Unlike conventional software input, the prompts in these sensitive domains submitted to LLM often contain a rich semantic context that implicitly reveals the personal information, intentions, and preferences of users \cite{staab2024beyond,linzheng4}.
Consider the example in the medical domain: \emph{“I tested positive for HIV last week and have been experiencing fever and diarrhea. Should I be concerned about secondary infections?”} \cite{zeng2025privacyrestore}.
This prompt includes protected health information (PHI) , if intercepted or mishandled, leading to severe privacy violations. 
Further, not all prompts carry the same privacy sensitivity: simple queries such as \emph{“What is the capital of France?”} pose minimal risk and could be processed safely in the cloud \cite{pmlr-v119-acharya20a}. 
These limitations highlight the need for new inference paradigms that preserve privacy without sacrificing efficiency, especially in semantically complex and sensitive domains.

%
%
To mitigate the privacy risks inherent in cloud-centric inference, recent cloud–edge collaborative frameworks attempt to retain sensitive information on the edge devices \cite{jin2024cecollm, CE-LSLM,PICE}. 
A common design pattern is to perform inference locally on a Small Language Model (SLM) for privacy-critical prompts, while offloading non-sensitive queries to the cloud-hosted LLM based on binary routing mechanisms \cite{li2025edgecloudsurvey}. 
%
Alternatively, some systems adopt an encryption-inspired strategy: they apply local differential privacy (LDP) noise to the user’s full prompt before transmitting it to the cloud, and then rely on the edge to reconstruct or refine the cloud’s response \cite{lin2025emojiprompt,mai2024split}.
Although these approaches offer basic privacy protection, they remain fundamentally coarse-grained. 
First, binary routing decisions, based on simple thresholding of risk scores, can misclassify instructions, either overburdening the edge device or compromising privacy~\cite{qu2024mobile}. 
Second, prompt-level LDP schemes, such as Split-and-Denoise~\cite{mai2024split} and DP-Forward~\cite{du2023dpforward} uniformly perturb all tokens or embedding dimensions, regardless of their actual sensitivity.  
This uniform treatment leads to unnecessary utility degradation, especially for benign queries. Furthermore, when noise is applied indiscriminately, the cloud model receives semantically valid but semantically distorted commands, often generating vague, generic, or evasive responses (for example, 
'I cannot provide information about [$\mathtt{MASKED\_ENTITY}$').
In short, existing architectures lack the ability to tailor privacy mechanisms to the specific semantics of each prompt. This inefficiency motivates a more adaptive approach, one that dynamically selects privacy pathways based on contextual cues within the prompt itself.

%
%
To this end, we introduce Privacy-aware Routing for Inference with Semantic Modulation (PRISM), a cloud–edge collaborative framework that adaptively selects inference pathways based on prompt semantics and contextual privacy risk, as illustrated in Figure~\ref{fig:enter-label}.
%
%
Applying the same protection strategy to all inputs, PRISM employs a soft gating mechanism on the edge device to assess each user prompt and route it to one of three execution modes: (1) direct cloud inference for low-risk prompts; (2) sketch-based collaboration for moderately sensitive prompts; and (3) fully local generation for high-risk, privacy-critical queries.
The mode decision is made by a logit-based soft classifier that combines named entity recognition (NER), contextual cues (e.g. first-person references, entity types) and a statistical risk score. 
After the gating module, the system follows one of three execution paths: direct cloud inference and edge-only generation proceed immediately using the corresponding model, while cloud–edge collaborative mode triggers additional privacy-preserving procedures. Specifically, prompts in this mode are first obfuscated via a two-layer local differential privacy mechanism, then processed by the cloud to generate a semantic sketch, which is subsequently refined on the edge device to reconstruct a coherent and privacy-preserving response.
%
%

 To this end, we makes the following key contributions:
\begin{itemize}
    \item We propose PRISM, a novel privacy-aware routing framework for adaptive cloud–edge LLM inference. PRISM combines a context-aware gating mechanism with a three-mode execution pipeline (cloud, edge, and cloud–edge collaboration). In the collaborative mode, PRISM introduces a semantic modulation strategy that integrates adaptive two-layer local differential privacy with cloud-side sketch generation and edge-side refinement, enabling fine-grained, entity-level privacy control. 
    \item  To support this framework, we construct a synthetic and context-rich instruction dataset covering four domains: medical, tourism, banking, and general knowledge. This design reflects diverse real-world use cases and allows evaluation under varying privacy demands.
    \item   We implement PRISM on a real-world cloud–edge platform and evaluate its performance under varying privacy budgets, systematically testing multiple combinations of edge-side SLMs and cloud-hosted LLMs. Our results demonstrate that PRISM adapts to heterogeneous deployments, consistently achieving better inference quality, lower latency, and reduced edge-side energy consumption compared to uniform protection baselines.
\end{itemize}

\section{Framework Design}
This section details the design of PRISM, our privacy-aware cloud–edge inference framework. The system consists of three components: a user, a local edge device, and a remote cloud server. The edge device hosts an SLM, while the cloud server provides access to an LLM.
Inference begins when a user sends a prompt to the edge device, where two modules, Sensitivity Profiling and Soft Gating, analyze privacy risk and select an execution path: cloud, edge, or collaboration. For collaborative cases, PRISM applies Adaptive Two-Layer LDP to perturb sensitive entities and uses Cloud-Edge Semantic Sketch Collaboration to generate an abstract response on the cloud, which is then refined locally and returned to the user. 


\subsection{Sensitivity Profiling for Context-Aware Routing}
We design a lightweight edge-side module to assess the privacy sensitivity of user prompts before routing. Given a prompt \(P = \{x_1, x_2, \dots, x_n\}\) consisting of \(n\) tokens, the module first extracts a set of \(m\) named entities \(E = \{e_1, e_2, \dots, e_m\}\) using a named entity recognition (NER) engine. It then produces two outputs: (1) a scalar risk score \(R(P)\) reflecting the overall privacy sensitivity of the prompt, and (2) a binary mask \(\mathbf{d} \in \{0,1\}^m\) indicating which entities require protection.

We first apply a fast named entity recognition (NER) engine (e.g., Presidio Analyzer) to extract entities \(E = \{e_1, e_2, \dots, e_m\}\), where each entity \(e_i\) has an associated category label \(c_i \in \mathcal{C}\), and we assign a predefined sensitivity weight \(w_{c_i} \in [0,1]\) based on its category.
 These weights reflect the relevance for privacy of each category of entity (for example \(w_{\texttt{PERSON}} > w_{\texttt{NATIONALITY}}\)).

We define a  risk score for the prompt:
\begin{equation}
    R(P) = \sum_{i=1}^{m} w_{c_i} \cdot \mathbb{I}(e_i),
\end{equation}
where \(\mathbb{I}(e_i)\) indicates whether the entity is present in the current input.

To incorporate contextual signals, we define a binary indicator \(\Delta\) that activates when any private linguistic cue is detected:
\begin{equation}
    \Delta = \max_{x_j \in P} \mathbb{I}(x_j \in \mathcal{F}),
\end{equation}
where the private context set \(\mathcal{F}\) includes both first-person pronouns and detected entities of type \texttt{PERSON}.
Here, \(\mathbb{I}[\cdot]\) denotes the indicator function, and the condition evaluates whether any token in the prompt \(P\) overlaps with a predefined set of first-person pronouns or previously detected person-type entities.

Each entity \(e_i\) is conservatively flagged for protection if \(\Delta > 0\), producing the binary mask:
\begin{equation}
d_i = 
\begin{cases}
1, & \text{if } \Delta > 0, \\
0, & \text{otherwise}.
\end{cases} , \forall i \in \{1,2,..., m\}.
\end{equation}

This module captures both statistical and contextual signals cues on the edge devices. For instance, in the prompts (1)\emph{``I plan to travel solo to Tokyo for three days''} and (2)\emph{``Which country is Tokyo located in?''}, the entity \emph{Tokyo} appears in both but only the former implies private user intent due to the pronoun \emph{I}. These signals are forwarded to the later routing controller for decision-making. 



\begin{figure}[t]
    \centering
    \includegraphics[width=0.87\linewidth]{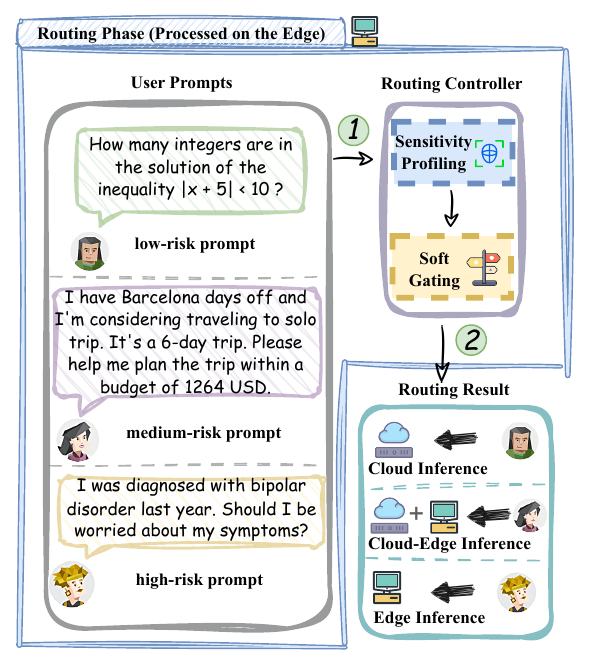}
    \caption{Illustration of the PRISM routing phase, where the edge-side controller analyzes prompt sensitivity and softly routes requests across cloud, edge, or collaborative paths.
}
    \label{fig:soft_gating}
\end{figure}

\subsection{Soft Gating with Entropy-Regularized Routing}
\label{sec:gating}

To adaptively balance privacy, utility, and energy cost, we propose a soft gating mechanism that maps sensitivity indicators to a probability distribution over three routing paths. Rather than committing to a hard decision boundary, this approach enables nuanced, context-aware execution by producing soft routing scores.
The gating module receives a feature vector $\mathbf{z} \in \mathbb{R}^{1+m}$ derived from the sensitivity profiling subsection, including the risk score $ R(P)$ and the sensitivity mask $\mathbf{d}$. These features are passed into a light-weight linear transformation \(f_\theta(\cdot)\) followed by softmax normalization:
\begin{equation}
    \boldsymbol{\pi} = \mathrm{softmax}(f_\theta(\mathbf{z})) \in \mathbb{R}^3.
\end{equation}
This yields a routing distribution \(\boldsymbol{\pi} = (\pi_{\text{cloud}}, \pi_{\text{collab}}, \pi_{\text{local}})\) over the three execution strategies: direct cloud inference, cloud–edge collaboration, and complete edge-side generation.
To encourage confident routing while preserving flexibility, we introduce an entropy penalty over the output distribution:
\begin{equation}
    \mathcal{L}_{\text{gating}} = \mathcal{L}_{\text{task}} + \lambda \cdot \mathcal{H}(\boldsymbol{\pi}),
\end{equation}
where  \(\mathcal{L}_{\text{task}}\) is the downstream task loss (cross-entropy loss for generation), $\mathcal{H}(\boldsymbol{\pi}) = -\sum_j \pi_j \log \pi_j$ is the entropy of the soft routing scores, and $\lambda$ is a tunable hyperparameter. Lower entropy encourages confident decisions, while higher entropy accommodates uncertain or ambiguous cases.

Figure~\ref{fig:soft_gating} provides an illustrative overview of the soft gating mechanism, showing how prompts of varying sensitivity are analyzed and softly routed to the most appropriate execution pathway based on their privacy characteristics.

At inference time, we select the most probable path by taking the top-1 decision from the soft routing distribution \(\boldsymbol{\pi}\), i.e., \(\arg\max_j \pi_j\). This deterministic selection ensures consistent privacy guarantees and avoids routing sensitive prompts to lower-protection paths due to randomness. Our formulation thus provides a principled and interpretable interface for integrating symbolic sensitivity signals with soft decision boundaries.

\subsection{Adaptive Two‐Layer Local Differential Privacy}
In cloud–edge collaborative collaboration, prompts containing sensitive entities must be encrypted before transmission.
Naive anonymization strategies, such as replacing a name with a generic token \(\langle\mathtt{NAME}\rangle\), do not mitigate the risk of linkage attacks. 
For example, consider the two records: (1)\emph{“\(\langle\mathtt{NAME}\rangle\) owns a black dog and often goes for a walk after dinner.”} and (2)\emph{“The owner of the black dog throws rubbish on Tuesdays.”} 
Although the user’s name \emph{“Bob”} has been masked, the shared semantic content \emph{“Black Dog"} allows an adversary to correlate the records and re-identify Bob. 
Worse, this cross-record inference also reveals behavioral details, such as Bob's routine and habits, amplifying the risk of privacy beyond identity disclosure.

Similarly, applying uniform \(\varepsilon\)-LDP, either throughout the prompt or uniformly across all types of entities, introduces suboptimal trade-offs: 
the former may disrupt linguistic coherence, while the latter either underprotects sensitive entities or overly perturbs benign ones, affecting downstream utility.
Moreover, when noise is applied indiscriminately, the cloud model receives syntactically valid but semantically distorted prompts, often yielding vague or evasive completions (e.g., \emph{``I cannot provide information about \(\langle\mathtt{MASKED\_ENTITY}\rangle\)''}).

To address these issues, we propose a \textit{two-layer adaptive LDP mechanism} that separately perturbs each entity’s \textit{category} and \textit{value} using independently allocated budgets \(\varepsilon_1\) and \(\varepsilon_2\), such that the total budget is preserved: \(\varepsilon_{\text{total}} = \varepsilon_1 + \varepsilon_2\). The allocation is determined by the sensitivity weight \(w_{c_i}\), which reflects how sensitive the category \(c_i\) of entity \(e_i\) is. This weight comes from our profiling module.

\begin{equation}
\begin{aligned}
\varepsilon_1 &= \varepsilon_{\text{total}} \cdot \frac{w_{c_i}}{w_{c_i} + (1 - w_{c_i}) \cdot \alpha}, \\
\varepsilon_2 &= \varepsilon_{\text{total}} - \varepsilon_1 ,
\end{aligned}
\end{equation}

where \(\alpha \in (0,1]\) is a tunable hyperparameter. This formulation ensures high-sensitivity categories (e.g., \emph{NAME}) receive more category-level protection, while lower-sensitivity types (e.g., \emph{NATIONALITY}) allocate more to value-level obfuscation. This adaptive allocation supports flexible trade-offs without violating $\varepsilon$-LDP guarantees, following composition rules in local privacy \cite{duchi2013local}.
When \(w_{c_i}\) is high (e.g., \emph{NAME, ID, DIAGNOSIS}), more budget is allocated to \(\varepsilon_1\) to protect the entity type, reducing the risk of inferring sensitive categories. Conversely, for lower \(w_{c_i}\) (e.g., \emph{ORGANIZATION, NATIONALITY, LOCATION}), the mechanism allocates more to \(\varepsilon_2\) to preserve semantic utility while anonymizing the value. 

Both the category-layer and value-layer perturbation follow the Randomized Response (RR) mechanism, a canonical construction for achieving $\varepsilon\text{-LDP:}$ in finite discrete domains~\cite{wang2017locally}.

\begin{equation}
\begin{aligned}
&\textit{(1) Category-Layer } \varepsilon_1\text{-LDP:} \\
&p_1 = \frac{\exp(\varepsilon_1)}{\exp(\varepsilon_1)+K_1-1}, \quad
c_i^* =
\begin{cases}
c_i & \text{w.p. } p_1, \\
\neq c_i & \text{w.p. } \frac{1 - p_1}{K_1 - 1}.
\end{cases} \\[1ex]
&\textit{(2) Value-Layer } \varepsilon_2\text{-LDP:} \\
&p_2 = \frac{\exp(\varepsilon_2)}{\exp(\varepsilon_2)+K_2-1}, \quad
e_i^* =
\begin{cases}
e_i & \text{w.p. } p_2, \\
\neq e_i & \text{w.p. } \frac{1 - p_2}{K_2 - 1}.
\end{cases}
\end{aligned}
\end{equation}

Figure~\ref{fig:two_layer_ldp} illustrates the adaptive two-layer LDP workflow. After profiling, each entity undergoes privacy budget allocation based on its category sensitivity weight \(w_{c_i}\). The entity is then sequentially processed by two perturbation layers: the first samples a noisy category label from the set \(\mathcal{C}\) of size \(K_1\) with probability $p_1$, and the second samples a value from the corresponding value set \(\mathcal{V}_{c_i}\) of size \(K_2\) with probability $p_2$.
To illustrate the adaptive nature of the method, we show two representative cases. Type B entities (e.g., \emph{NAME}) are assigned high sensitivity weights and therefore undergo heavier category-level obfuscation to hide their semantic type. In contrast, type A entities (e.g., \emph{LOCATION}) are considered less sensitive; they retain their category label while receiving stronger perturbation on the value level, ensuring semantic coherence of the prompt. This selective strategy enables better privacy–utility trade-offs, especially in resource-constrained cloud–edge deployments.

\begin{figure}[t]
    \centering
    \includegraphics[width=0.92\linewidth]{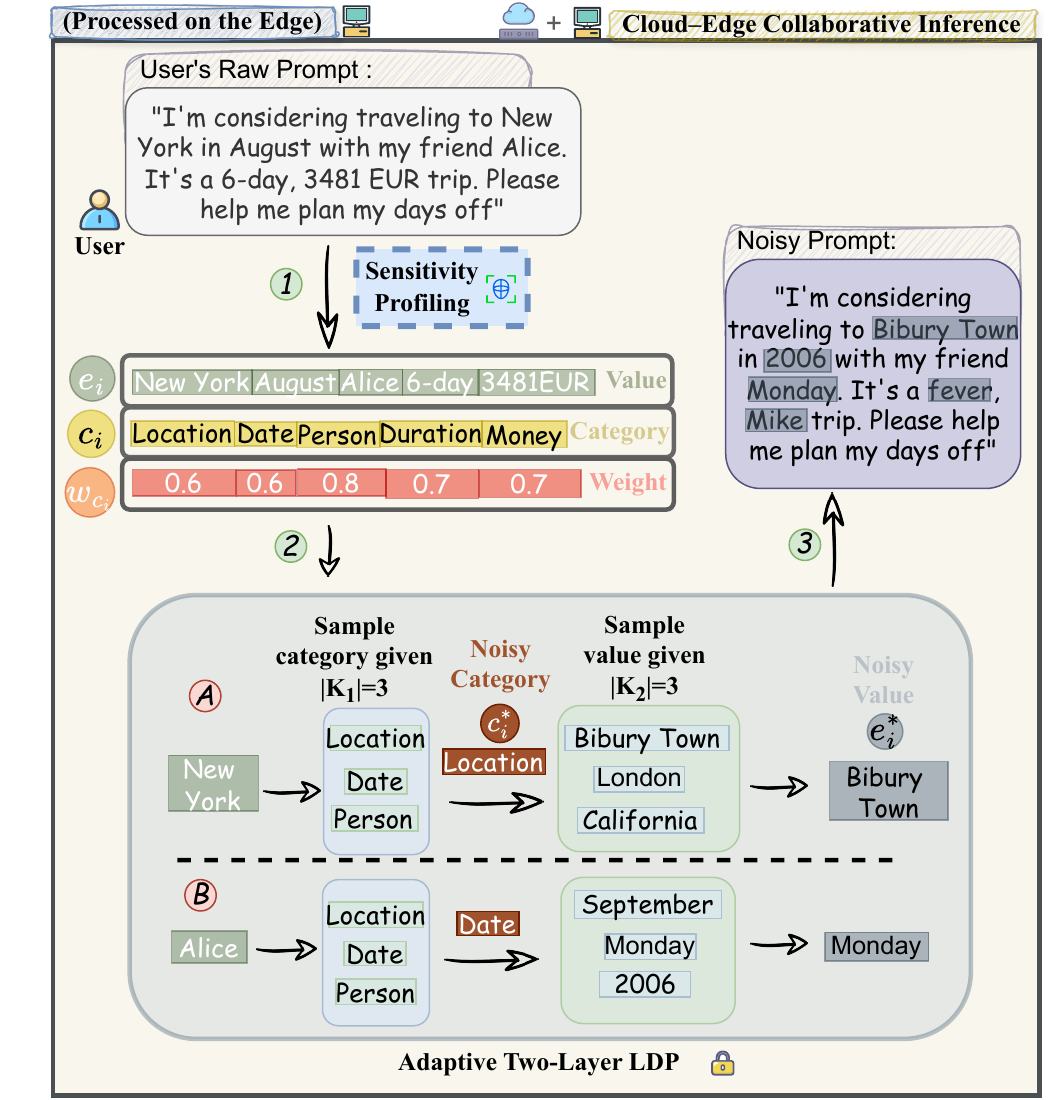}
    \caption{Adaptive Two-Layer LDP for Entity Obfuscation}
    \label{fig:two_layer_ldp}
\end{figure}

Our method matches the optimal $\varepsilon$-LDP design for finite discrete domains \cite{ wang2017locally}, but extends it via category-aware allocation. When \(w_{c_i} \to 1\), \(\varepsilon_1 \to \varepsilon_{\text{total}}\), maximizing protection of the category. When \(w_{c_i} \to 0\), the reverse holds, preserving utility. Compared to static masking or uniform LDP, our risk-adaptive mechanism preserves semantic coherence by selectively perturbing high-risk entities more aggressively while minimizing distortion on benign ones. 


\subsection{Cloud-Edge Semantic Sketch Collaboration}
\label{sec:cloud_edge_sketch}

In our architecture, each input prompt first passes through a routing phase that determines the appropriate execution mode: edge-only, cloud-edge collaboration, or cloud-only, based on its assessed privacy risk and resource constraints. Prompts routed to the collaborative mode then undergo our adaptive two-layer LDP mechanism,producing a perturbed prompt \(  P^* \), where \( P^* \) replaces sensitive entities in \(P\) using randomized response. The resulting noisy prompt is directly transmitted to the cloud in plain text, avoiding the transmission of embeddings and eliminating the need for synchronized tokenizers or shared embedding spaces between devices.

Given the perturbed prompt \( P^* \), the cloud-side LLM \(\mathcal{G}_{\text{cloud}}\) generates a semantic sketch \( S \in \mathcal{T}_{\text{sketch}} \) using a few-shot in-context prompt using a demonstration set \(\mathcal{D}_{\text{cloud}} = \{(P^{*(l)}, S^{(l)})\}_{l=1}^{k}\).  
The generation context is constructed as \(\mathcal{C}_{\text{cloud}} = [\mathcal{D}_{\text{cloud}}, (P^*,\_)]\),
where \(\_\) indicates the sketch to be generated. The cloud model then produces:
\begin{equation}
S = \mathcal{G}_{\text{cloud}}(\mathcal{C}_{\text{cloud}}).
\end{equation}

The sketch \(S\) adopts a concise, structured format and omits sensitive entities obfuscated in \(P^*\). This design ensures (i) \emph{semantic alignment}, preserving the original intent despite noise, and (ii) \emph{structural regularity}, supporting downstream integration.

Upon receiving the sketch \(S\), the edge-side SLM \(\mathcal{G}_{\text{edge}}\) reconstructs the final response \(\hat{R}\) by conditioning on both \(S\) and the original prompt \(P\), which remains locally available.

Similarly, we define the edge-side demonstration set as \(\mathcal{D}_{\text{edge}} = \{(P^{(l)}, S^{(l)}, R^{(l)})\}_{l=1}^{k}\), and construct the generation context as \(\mathcal{C}_{\text{edge}} = [\mathcal{D}_{\text{edge}}, (P, S, \_)]\), where \(\_\) denotes the final response to be generated. The final response is then generated by the edge model:
\begin{equation}
\hat{R} = \mathcal{G}_{\text{edge}}(\mathcal{C}_{\text{edge}}).
\end{equation}

This collaborative inference design preserves privacy while leveraging semantic abstraction for faithful and efficient response generation.

To unify the overall decision and execution process, we present the complete PRISM inference pipeline in Algorithm~\ref{alg:prism}. It integrates sensitivity-aware routing, entity-level perturbation via adaptive LDP, and collaborative generation with sketch-based semantic alignment. This algorithm formalizes the end-to-end behavior of our framework across all execution modes, including edge-only, cloud-only, and cloud–edge collaboration.

\begin{algorithm}[t]
\caption{PRISM Framework}
\label{alg:prism}
\KwIn{
User prompt \(P = \{x_1, \dots, x_n\}\);\quad

Cloud demo set \(\mathcal{D}_{\text{cloud}} = \{(P^{*(l)}, S^{(l)})\}_{l=1}^{k}\);\quad

Edge demo set \(\mathcal{D}_{\text{edge}} = \{(P^{(l)}, S^{(l)}, R^{(l)})\}_{l=1}^{k}\);\quad
Models \(\mathcal{G}_{\text{cloud}}, \mathcal{G}_{\text{edge}}\)
}

\KwOut{Final response \(\hat{R}\)}

\((R(P), \mathbf{d}) \leftarrow \textsc{SensitivityProfiling}(P)\)\;
\(\boldsymbol{\pi} \leftarrow \textsc{SoftGating}([R(P), \mathbf{d}])\)\;
\(\texttt{mode} \leftarrow \arg\max_j \pi_j\)\;

\If{\texttt{mode} $=$ \texttt{edge-only}}{
    \Return \( \hat{R} = \mathcal{G}_{\text{edge}}(P) \)
}
\If{\texttt{mode} $=$ \texttt{cloud-only}}{
    \Return \( \hat{R} = \mathcal{G}_{\text{cloud}}(P) \)
}

    
    
    

\If{\texttt{mode} $=$ \texttt{collaborative}}{
    \(P^* \leftarrow \textsc{TwoLayerLDP}(P, \mathbf{d})\)\;
    \(\mathcal{C}_{\text{cloud}} \leftarrow [\mathcal{D}_{\text{cloud}}, (P^*, \_)]\)\;
    \(S \leftarrow \mathcal{G}_{\text{cloud}}(\mathcal{C}_{\text{cloud}})\)\;

    \(\mathcal{C}_{\text{edge}} \leftarrow [\mathcal{D}_{\text{edge}}, (P, S, \_)]\)\;
    \(\hat{R} \leftarrow \mathcal{G}_{\text{edge}}(\mathcal{C}_{\text{edge}})\)\;
    \Return \(\hat{R}\)
}

\end{algorithm}

\section{Analysis}



In this section, we provide a rigorous analysis of the privacy properties of our selective-privacy cloud–edge inference framework. Specifically, we establish the local differential privacy guarantees of our two-layer perturbation mechanism and characterize how sensitivity weights influence budget allocation. 

\begin{theorem}[Two-Layer LDP Privacy Guarantee]
Let \( M \) be the adaptive two-layer mechanism applied to a sensitive entity \( e_i \) with category \( c_i \in \mathcal{C} \) (of size \( K_1 \)) and value domain \( \mathcal{V}_{c_i^*} \) (of size \( K_2 \)). The mechanism sequentially applies:
\begin{enumerate}
    \item Category-level randomized response \(M_1\) with budget \(\epsilon_1\), producing \(c_i^*\);
    \item Value-level randomized response \(M_2\) with budget \(\epsilon_2\), producing \(e_i^*\).
\end{enumerate}
Then \(M = M_2 \circ M_1\) satisfies \((\epsilon_1 + \epsilon_2)\)-local differential privacy over the pair \((c_i, e_i)\).
\end{theorem}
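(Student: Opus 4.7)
The plan is to reduce the theorem to (i) verifying that each layer of the mechanism is individually $\epsilon$-LDP in the randomized-response sense and (ii) invoking sequential composition, taking care that the second layer's output domain depends on the first layer's output. I would open by recalling the definition of $\epsilon$-LDP: a mechanism $M$ satisfies $\epsilon$-LDP if for every pair of inputs $x, x'$ and every output $y$, $\Pr[M(x)=y]\leq e^{\epsilon}\Pr[M(x')=y]$.

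For Layer 1, I would plug the stated probabilities $p_1 = e^{\epsilon_1}/(e^{\epsilon_1}+K_1-1)$ and $(1-p_1)/(K_1-1) = 1/(e^{\epsilon_1}+K_1-1)$ into the ratio test. The worst case arises when the numerator uses the true-match branch and the denominator uses the non-match branch, giving exactly $p_1 / [(1-p_1)/(K_1-1)] = e^{\epsilon_1}$; for any other pair of branches the ratio is bounded by $1$ or by $e^{\epsilon_1}$ by symmetry. The identical argument applied to $p_2$ over the $K_2$-ary value alphabet yields $\epsilon_2$-LDP for Layer 2.

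The core step is the composition argument. I would condition on the realized perturbed category $c^*$ and factor the joint output distribution as
\begin{equation}
\Pr[M(c_i,e_i)=(c^*,e^*)] = \Pr[M_1(c_i)=c^*]\,\Pr[M_2(e_i\mid c^*)=e^*].
\end{equation}
For any alternative input $(c_i', e_i')$, the same value domain $\mathcal{V}_{c^*}$ governs the denominator's second factor because the realized first-layer output is held fixed. Applying the per-layer $\epsilon$-LDP bound to each factor and multiplying yields a ratio bound of $e^{\epsilon_1+\epsilon_2}$, which is precisely the claimed $(\epsilon_1+\epsilon_2)$-LDP guarantee over the joint output space.

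The main obstacle I anticipate is justifying composition in the face of the data-dependent domain: $\mathcal{V}_{c^*}$ depends on the random output of $M_1$, so one cannot blindly cite the textbook sequential composition theorem, which is usually phrased for a fixed second mechanism. The resolution is to condition on the realized $c^*$ before comparing the likelihood ratios of the two hypothetical inputs; once $c^*$ is fixed, Layer 2 reduces to a bona fide $\epsilon_2$-LDP randomized response on a common alphabet, and the multiplicative composition goes through routinely. A brief remark at the end would confirm that the argument does not rely on the specific allocation rule for $(\epsilon_1, \epsilon_2)$, so the category-aware budget split is compatible with the guarantee.
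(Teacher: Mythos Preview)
Your proposal is correct and follows essentially the same route as the paper: verify that each randomized-response layer attains its $\epsilon_j$-LDP bound via the likelihood-ratio calculation, then compose to obtain $(\epsilon_1+\epsilon_2)$-LDP. Your treatment is in fact slightly more careful than the paper's, which simply invokes the sequential composition theorem without explicitly addressing the data-dependent value domain $\mathcal{V}_{c^*}$; your conditioning-on-$c^*$ argument closes that gap cleanly.
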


\begin{proof}
According to the definition of \(\varepsilon\)-local differential privacy\cite{Dagan2006PASCAL_RTE,Dwork2006DifferentialPrivacy}, a mechanism \(M\) satisfies \(\varepsilon\)-LDP if for any two distinct inputs \(x, x'\) and any output \(y\), we have:
\[
\frac{\Pr[M(x) = y]}{\Pr[M(x') = y]} \leq \exp(\varepsilon).
\]
We now verify this condition for each component.

\emph{Step 1: Category-level privacy.}  
We apply a standard randomized response mechanism over the finite category domain \(\mathcal{C}\) of size $K_1$, where the probability of outputting the true class is
\(
p_1 = \frac{\exp(\varepsilon_1)}{\exp(\varepsilon_1)+K_1-1},
\)
and the remaining probability mass \((1 - p_1)\) is distributed uniformly over the \(K_1 - 1\) other categories:
\[
\Pr[M_1(c_i) = c_i^*] =
\begin{cases}
p_1, & \text{if } c_i^* = c_i, \\
\frac{1 - p_1}{K_1 - 1}, & \text{if } c_i^* \neq c_i.
\end{cases}
\]

For any two distinct inputs \(c_i, c_i' \in \mathcal{C}\), and any output \(c_i^*\), we compute the likelihood ratio:

\[
\frac{\Pr[M_1(c_i) = c_i^*]}{\Pr[M_1(c_i') = c_i^*]}=
\begin{cases}
    \frac{p_1}{\frac{1 - p_1}{K_1 - 1}} = \exp(\varepsilon_1), & \text{if } c_i^* = c_i, \\
     \frac{\frac{1 - p_1}{K_1 - 1}}{p_1} = \frac{1}{\exp(\varepsilon_1)}, & \text{if } c_i^* = c_i'.
\end{cases}
\]

In either case, the ratio is bounded:
\(
\frac{\Pr[M_1(c_i) = c_i^*]}{\Pr[M_1(c_i') = c_i^*]} \leq \exp(\epsilon_1).
\)
Thus, \(M_1\) satisfies \(\varepsilon_1\)-LDP.

\emph{Step 2: Value-level privacy.}  
Over the value domain \(\mathcal{V}_{c_i^*}\) of size \(K_2\), the randomized response mechanism \(M_2\) outputs:
\[
\Pr[M_2(e_i) = e_i^*] =
\begin{cases}
    p_2 = \frac{\exp(\varepsilon_2)}{\exp(\varepsilon_2)+K_2-1}, & \text{if } e_i^* = e_i, \\
    \frac{1 - p_2}{K_2 - 1}, & \text{if } e_i^* \neq e_i.
\end{cases}
\]

A similar analysis shows that for any two distinct values \(e_i, e_i'\), the likelihood ratio is bounded by \(\exp(\varepsilon_2)\), so \(M_2\) satisfies \(\varepsilon_2\)-LDP.
\emph{Step 3: Composition.}  
Since \(M_1\) and \(M_2\) are applied independently and sequentially to disjoint components \((c_i, e_i)\), we apply the sequential composition theorem~\cite{duchi2013local} to obtain:
\[
M(c_i, e_i) = M_2 \circ M_1 \Rightarrow (\varepsilon_1 + \varepsilon_2)\text{-LDP}.
\]
\end{proof}

This theorem guarantees that for each individually selected sensitive entity, our adaptive two-layer LDP provides bounded privacy leakage consistent with the allocated total budget \(\epsilon_{\text{total}} = \epsilon_1 + \epsilon_2\). 

\begin{theorem}[Effect of Sensitivity Weight on Budget Allocation]
Let \( w_{c_i} \in [0,1] \) be the sensitivity weight of category \( c_i \), and let \( \varepsilon_{\text{total}} > 0 \) be the total privacy budget. Define the allocation as:
\[
\varepsilon_1 = \varepsilon_{\text{total}} \cdot \frac{w_{c_i}}{w_{c_i} + (1 - w_{c_i}) \cdot \alpha}, \quad
\varepsilon_2 = \varepsilon_{\text{total}} - \varepsilon_1,
\]
where \( \alpha \in (0,1] \) is a tunable hyperparameter that controls the relative importance of value protection.
Then:\begin{enumerate}
    \item When \( w_{c_i} = 1 \), we have \( \varepsilon_1 = \varepsilon_{\text{total}} \), \( \varepsilon_2 = 0 \).
    \item When \( w_{c_i} = 0 \), we have \( \varepsilon_1 = 0 \), \( \varepsilon_2 = \varepsilon_{\text{total}} \).
    \item \( \varepsilon_1(w_{c_i}) \) is monotonically increasing on \([0,1]\); \( \varepsilon_2(w_{c_i}) \) decreases monotonically.
\end{enumerate}
\end{theorem}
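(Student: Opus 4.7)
The plan is to dispatch the three claims in order: items (1) and (2) reduce to direct substitution into the allocation formula, while item (3) follows from a single derivative computation applied to the rational function $f(w) = w/(w + (1-w)\alpha)$, so that $\varepsilon_1(w) = \varepsilon_{\text{total}} \cdot f(w)$ and $\varepsilon_2(w) = \varepsilon_{\text{total}} \cdot (1 - f(w))$ throughout.

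First I would handle the boundary cases by plugging into $f$. Setting $w_{c_i} = 1$ collapses the denominator to $1 + 0 = 1$, giving $\varepsilon_1 = \varepsilon_{\text{total}}$ and hence $\varepsilon_2 = 0$. Setting $w_{c_i} = 0$ makes the numerator vanish while the denominator becomes $\alpha$, yielding $\varepsilon_1 = 0$ and $\varepsilon_2 = \varepsilon_{\text{total}}$. Both substitutions are well-defined precisely because the hypothesis $\alpha \in (0,1]$ keeps the denominator strictly positive on the entire unit interval.

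For monotonicity, I would rewrite the denominator as $w(1-\alpha) + \alpha$ and apply the quotient rule. After cancellation the derivative simplifies to $f'(w) = \alpha / [w(1-\alpha) + \alpha]^2$, which is strictly positive for every $w \in [0,1]$ since $\alpha > 0$. Hence $\varepsilon_1$ is strictly increasing in $w_{c_i}$, and monotonic decrease of $\varepsilon_2 = \varepsilon_{\text{total}} - \varepsilon_1$ follows immediately by linearity. Continuity of $f$ together with the endpoint values computed above also gives the full image $\varepsilon_1([0,1]) = [0, \varepsilon_{\text{total}}]$ as a byproduct.

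The proof is essentially a calculus exercise, so no step poses a genuine obstacle. The only subtlety worth flagging is the role of the hypothesis $\alpha > 0$: without it, the denominator of $f$ would also vanish at $w_{c_i} = 0$, producing the indeterminate form $0/0$ and invalidating both the boundary evaluation in (2) and the sign of $f'$ at the left endpoint in (3). Making this dependence on the domain of $\alpha$ explicit is the one thing I would be careful to state when writing out the full argument.
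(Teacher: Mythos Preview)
Your proposal is correct and mirrors the paper's proof almost exactly: both handle items (1) and (2) by direct substitution and establish (3) via the quotient-rule derivative $f'(w)=\alpha/[w(1-\alpha)+\alpha]^2$, then conclude monotonicity of $\varepsilon_2$ from $\varepsilon_2=\varepsilon_{\text{total}}-\varepsilon_1$. The only cosmetic difference is that the paper writes the squared denominator as $((\alpha-1)w_{c_i}-\alpha)^2$ and separately argues its root $w_{c_i}=\alpha/(\alpha-1)$ lies outside $[0,1]$, whereas you observe positivity of $w(1-\alpha)+\alpha$ directly.
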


\begin{proof}
We analyze the function:
\[
\varepsilon_1(w_{c_i}) = \varepsilon_{\text{total}} \cdot \frac{w_{c_i}}{w_{c_i} + (1 - w_{c_i}) \cdot \alpha}, \quad w_{c_i} \in [0,1].
\]

\emph{Boundary Cases:}
\begin{itemize}
    \item When \( w_{c_i} = 1 \), the fraction becomes \( \frac{1}{1} = 1 \), so
    \(
    \varepsilon_1 = \varepsilon_{\text{total}},  \varepsilon_2 = 0.
    \)
    \item When \( w_{c_i} = 0 \), the fraction becomes \( \frac{0}{\alpha} = 0 \), so
    \(
    \varepsilon_1 = 0, \varepsilon_2 = \varepsilon_{\text{total}}.
    \)
\end{itemize}

\emph{Monotonicity:} Consider the derivative of \( \varepsilon_1(w_{c_i}) \) with respect to \( w_{c_i} \):
\[
\frac{d\varepsilon_1(w_{c_i})}{dw_{c_i}} = \varepsilon_{\text{total}} \cdot \frac{\alpha}{\left((\alpha - 1)w_{c_i} - \alpha\right)^2}.
\]

Note that the denominator is a square term and therefore always nonnegative. It equals zero if and only if:
\[
(\alpha - 1)w_{c_i} - \alpha = 0 \quad \Leftrightarrow \quad w_{c_i} = \frac{\alpha}{\alpha - 1}.
\]

Since \(\alpha \in (0,1]\), we have \(\alpha - 1 < 0\), so \(\frac{\alpha}{\alpha - 1} < 0\), which lies outside the domain \(w_{c_i} \in [0,1]\). Therefore, the denominator is strictly positive throughout the valid domain.
Moreover, since the numerator is the product of \(\alpha \in (0,1]\) and $\varepsilon_{\text{total}} > 0 $, it is strictly positive. 

Hence, the entire expression is strictly greater than zero for all \(w_{c_i} \in [0,1]\), proving that the allocation function \(\varepsilon_1(w_{c_i})\) is monotonically increasing, and \( \varepsilon_2(w_{c_i}) = \varepsilon_{\text{total}} - \varepsilon_1(w_{c_i}) \) is strictly decreasing.

\end{proof}

This theorem confirms that higher category sensitivity weights allocate more budget to category-level perturbation (\(\varepsilon_1\)), as changing the category of an entity induces greater semantic distortion, offering stronger protection for highly sensitive categories.  Lower weights favor value-level noise (\(\varepsilon_2\)) to preserve utility for less sensitive entities.

\section{Evaluation}



\subsection{Evaluation Setting}

\begin{figure*}[t]
    \centering
    \begin{subfigure}[t]{0.32\linewidth} 
        \centering
        \includegraphics[width=\linewidth]{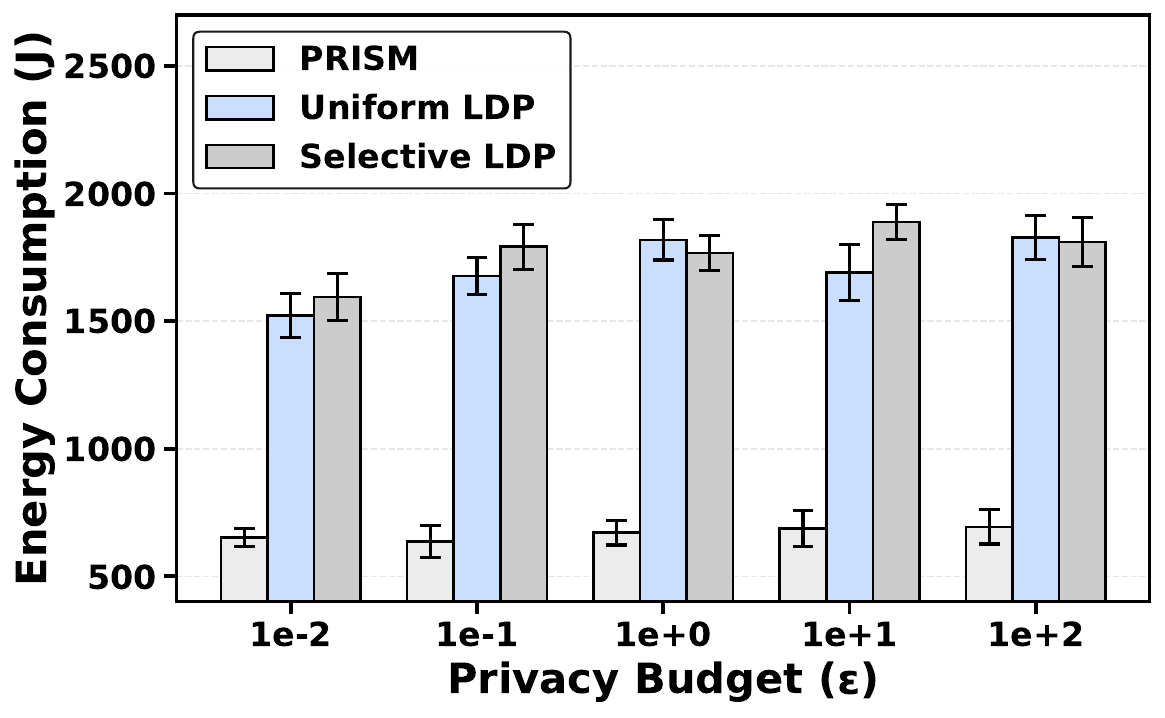}
        \caption{Energy Consumption}
        \label{fig:energy}
    \end{subfigure}
    \hfill
    \begin{subfigure}[t]{0.32\linewidth}
        \centering
        \includegraphics[width=\linewidth]{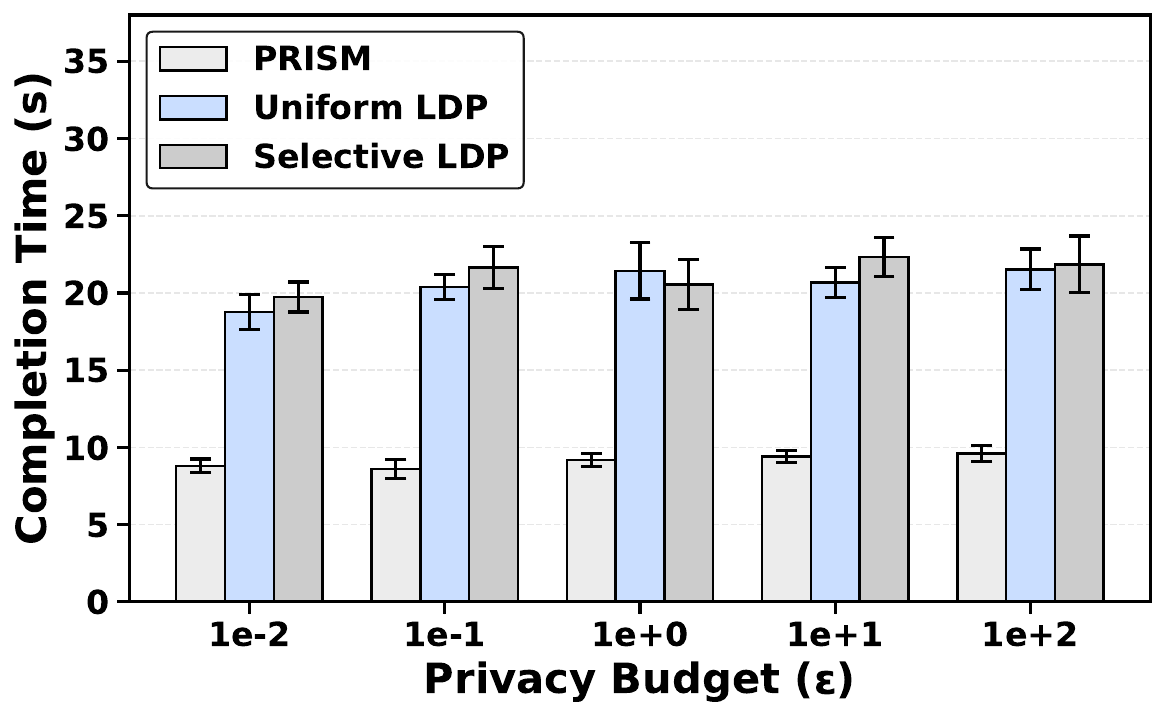}
        \caption{Completion Time}
        \label{fig:time}
    \end{subfigure}
    \hfill
    \begin{subfigure}[t]{0.32\linewidth}  
        \centering
        \includegraphics[width=\linewidth]{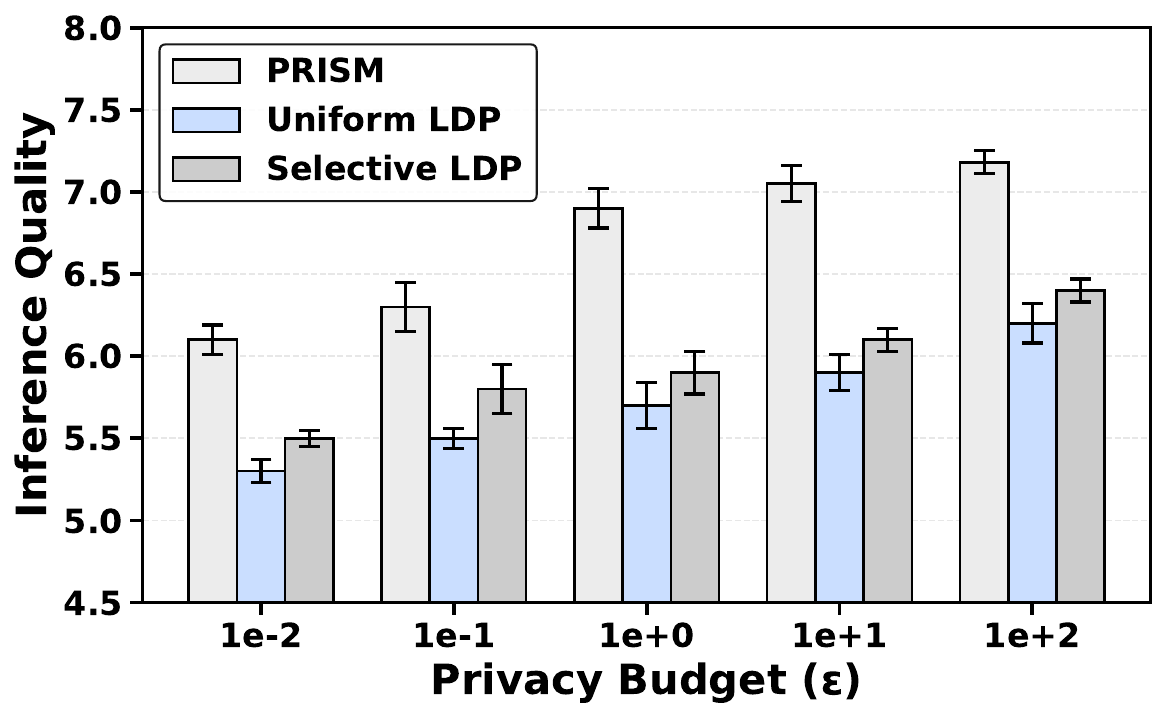}
        \caption{Inference Quality}
        \label{fig:llm}
    \end{subfigure}

    \caption{Comparison of privacy-preserving methods across privacy budgets on three dimensions: 
             (a) energy consumption in Joules, (b) completion time in seconds,  and (c) inference quality evaluated by LLM as a judge.} 
    \label{fig:privacy_all_metrics}
\end{figure*}

\emph{{Dataset: }}To facilitate comprehensive evaluation under privacy-sensitive scenarios, we construct a semi-synthetic instruction dataset designed to emulate realistic user–LLM interactions across four representative application domains: (1) \emph{Tourism planning}, covering user identities, travel budgets, and destinations; (2) \emph{Medical consultation}, involving demographic attributes and symptom descriptions, partially adapted from the dataset used in PrivacyRestore \cite{zeng2025privacyrestore}; (3) \emph{Banking services}, featuring transaction histories, account identifiers, and institutional details; and (4) \emph{General knowledge}, drawn from MT-Bench~\cite{zheng2023judging} as a non-sensitive dataset. Each domain contains 40 prompts with diverse phrasings and structured entity distributions, enabling controlled variation in contextual sensitivity. This dataset supports systematic, fine-grained benchmarking of privacy-aware LLM inference strategies.

\emph{Environment Setup: }
We evaluate PRISM in a realistic cloud-edge deployment to reflect practical resource constraints and latency characteristics.
The edge-side model is deployed on a local workstation equipped with a NVIDIA RTX 3070 laptop GPU running Windows 10. We use the \textit{llama\_cpp\_python} backend to serve quantized SLMs, with the GPU offloading configuration set to \textit{gpu\_layers=32}. 
The cloud-side model is accessed via API calls to hosted LLMs The cloud performs sketch generation based on the perturbed prompts received from the edge and returns the abstracted sketches for downstream reconstruction.

\emph{Baselines: }We evaluate our framework against the following representative baselines. (1) \emph{Uniform LDP} adds Laplace noise to all input tokens and sends the perturbed prompt to the cloud for inference, followed by edge-side refinement without semantic sketching \cite{mai2024split}. (2) \emph{Selective LDP} applies LDP only to entities identified by the NER model \cite{shi-etal-2022-selective}, then performs cloud inference and edge refinement similarly. (3)  \emph{Cloud-only} sends the entire unaltered prompt to a cloud-based LLM for inference. (4) \emph{Edge-only} performs inference locally on an edge device using SLM with the original prompt.

\emph{Evaluation Metrics: }We evaluate each method using three key metrics. (1) \emph{Inference Quality} is assessed by GPT-4o-based scoring on a scale from 1 to 10, reflecting the relevance, coherence, and informativeness of the generated responses \cite{zheng2023judging}. (2) \emph{Energy Consumption} is measured in Joules using Windows-based power monitoring tools, reflecting the total energy consumed by the edge device during the inference period. For cloud-only inference, this includes only the energy cost incurred by data transmission and idle system operations on the edge, since the actual model execution occurs remotely on the cloud. (3) \emph{ Completion Time} denotes the end-to-end latency from prompt input to final response, indicating system responsiveness.

\subsection{Results}

Table~\ref{tab:avg_method_summary} shows that PRISM achieves the best overall efficiency among privacy-preserving methods, with only 7.92\,s completion time and 687\,J energy consumption. This efficiency stems from its adaptive routing mechanism, which sends non-sensitive prompts directly to the cloud, routes low-risk ones to cloud–edge inference, and retains highly sensitive prompts for local processing. While Cloud-only attains the best raw performance (5.13\,s, 296\,J, IQ = 8.14), it offers no privacy protection. In contrast, PRISM ranks second across all methods, and requires only 1.54$\times$ the latency and 2.32$\times$ the energy of Cloud-only, making it a favorable privacy–efficiency trade-off. Meanwhile, Uniform and Selective LDP methods incur over 20\,s latency and 1700+\,J energy due to indiscriminate or entity-based noise injection.

Figure~\ref{fig:privacy_all_metrics} illustrates the performance trends of three privacy-preserving methods across varying privacy budgets. We observe that PRISM consistently outperforms Uniform and Selective LDP in all three metrics, achieving notably lower energy consumption (e.g., 652\,J vs.\ 1800+\,J), shorter completion time (e.g., 8.8\,s vs.\ 21+\,s), and higher inference quality (up to 7.2 vs.\ 6.2 or lower). PRISM exhibits  stable performance across privacy budgets. This robustness arises from its adaptive routing and minimal perturbation design, which avoids over-noising and computational waste while maintaining utility under strong privacy constraints.

\begin{table}
\centering
\small
\begin{tabular}{lrrr}
\toprule
    Method &   Ct.(s) &   Ec.(J) &   IQ. \\
\midrule
     PRISM &        7.92 &      687.16 &     6.88 \\
     Uniform LDP &        20.56 &      1707.6 &     5.72 \\
     Selective LDP &        21.22 &      1770.8 &     5.94 \\
 Edge-Only &       17.84 &     1573.9 &     5.09 \\
Cloud-Only &        \textbf{5.13} &      \textbf{296.27} &     \textbf{8.14} \\
\bottomrule
\end{tabular}
\caption{Performance metrics across methods. Metrics include completion time (Ct.), energy consumption (Ec.), and inference quality (IQ.). The best performance per metric is shown in \textbf{bold}. }
\label{tab:avg_method_summary}
\end{table}

Table~\ref{tab:main_results} presents the performance of PRISM in eight combinations of cloud-edge models. We observe that all pairings deliver high inference quality (IQ $\geq$ 6.9) with moderate latency (7.08–8.60\,s) and low energy (632–739\,J), demonstrating the flexibility of PRISM in heterogeneous deployments. In particular, the combination of GPT-4o (L1) and Qwen1.5-1.8B (S2) achieves the fastest inference (7.08\,s) and the lowest energy (632\,J), while GPT-4o with StableLM (S3) yields the highest quality (7.16). Meanwhile, the Qwen3-235B (L2) variants offer a slightly slower response but match or exceed the quality of the generation (up to 7.22), underscoring the adaptability of PRISM to varying cloud / edge model capabilities.

\begin{table}[h]
\small
\centering
\begin{tabular}{llll}  
  \toprule
  \multirow{2}{*}{Model Combinations}  & \multicolumn{3}{c}{PRISM}  \\
  \cmidrule(lr){2-4} 
  & Ct.(s) & Ec.(J) & IQ.  \\
    \midrule
     \emph{L1} \textbf{+}  \emph{S1}     &  8.29  &  683.83  &  7.00  \\
     \emph{L1} \textbf{+}  \emph{S2}      &  \textbf{7.08} &  \textbf{632.24} &  6.91  \\
     \emph{L1} \textbf{+}  \emph{S3}     &  7.34  &  657.88  &  7.16 \\
     \emph{L1} \textbf{+}  \emph{S4}    & 7.35   &  653.62  &  5.28   \\
     \emph{L2} \textbf{+}  \emph{S1}    &  8.59  &  738.88  &  \textbf{7.22}  \\
     \emph{L2} \textbf{+}  \emph{S2}      &  8.60  &  739.59  &  7.06     \\
     \emph{L2} \textbf{+}  \emph{S3}       &  8.00  &  693.13 &  7.19   \\
     \emph{L2} \textbf{+}  \emph{S4}      &  8.11 &  698.10  &  7.19   \\
     \bottomrule
  \end{tabular}
  \caption{
Performance of PRISM across various combinations of cloud-side LLMs (L1:GPT-4o, L2:Qwen3-235B) and edge-side SLMs (S1:Phi-3.5-mini-3.5B, S2:Qwen1.5-1.8B, S3:StableLM-2-Zephyr-1.6B, S4:TinyLLaMA-1.1B).
}
\label{tab:main_results}
\end{table}

\section{Conclusion}
This work presents PRISM, a novel privacy-aware cloud–edge inference framework that dynamically routes user prompts based on semantic sensitivity. By integrating an adaptive two-layer local differential privacy mechanism, and a semantic sketch-based collaboration protocol, PRISM enables efficient, privacy-preserving inference without compromising utility. Comprehensive evaluations across four domains and eight model configurations demonstrate that PRISM achieves superior performance trade-offs, incurring only 1.54× latency and 2.32× energy overhead compared to cloud-only inference, while preserving strong privacy guarantees and maintaining high output quality.

As future work, we plan to extend PRISM to support collaborative inference across multiple edge devices, each equipped with an SLM. This setting introduces new challenges in routing, load balancing, and cloud-edge-device coordination, which we aim to address through decentralized scheduling and federated optimization mechanisms.

\bibliography{r.bib}

\newpage
\setlength{\leftmargini}{20pt}
\makeatletter\def\@listi{\leftmargin\leftmargini \topsep .5em \parsep .5em \itemsep .5em}
\def\@listii{\leftmargin\leftmarginii \labelwidth\leftmarginii \advance\labelwidth-\labelsep \topsep .4em \parsep .4em \itemsep .4em}
\def\@listiii{\leftmargin\leftmarginiii \labelwidth\leftmarginiii \advance\labelwidth-\labelsep \topsep .4em \parsep .4em \itemsep .4em}\makeatother

\setcounter{secnumdepth}{0}
\renewcommand\thesubsection{\arabic{subsection}}
\renewcommand\labelenumi{\thesubsection.\arabic{enumi}}

\newcounter{checksubsection}
\newcounter{checkitem}[checksubsection]

\newcommand{\checksubsection}[1]{%
  \refstepcounter{checksubsection}%
  \paragraph{\arabic{checksubsection}. #1}%
  \setcounter{checkitem}{0}%
}

\newcommand{\checkitem}{%
  \refstepcounter{checkitem}%
  \item[\arabic{checksubsection}.\arabic{checkitem}.]%
}
\newcommand{\question}[2]{\normalcolor\checkitem #1 #2 \color{blue}}
\newcommand{\ifyespoints}[1]{\makebox[0pt][l]{\hspace{-15pt}\normalcolor #1}}

\includepdf[pages=-]{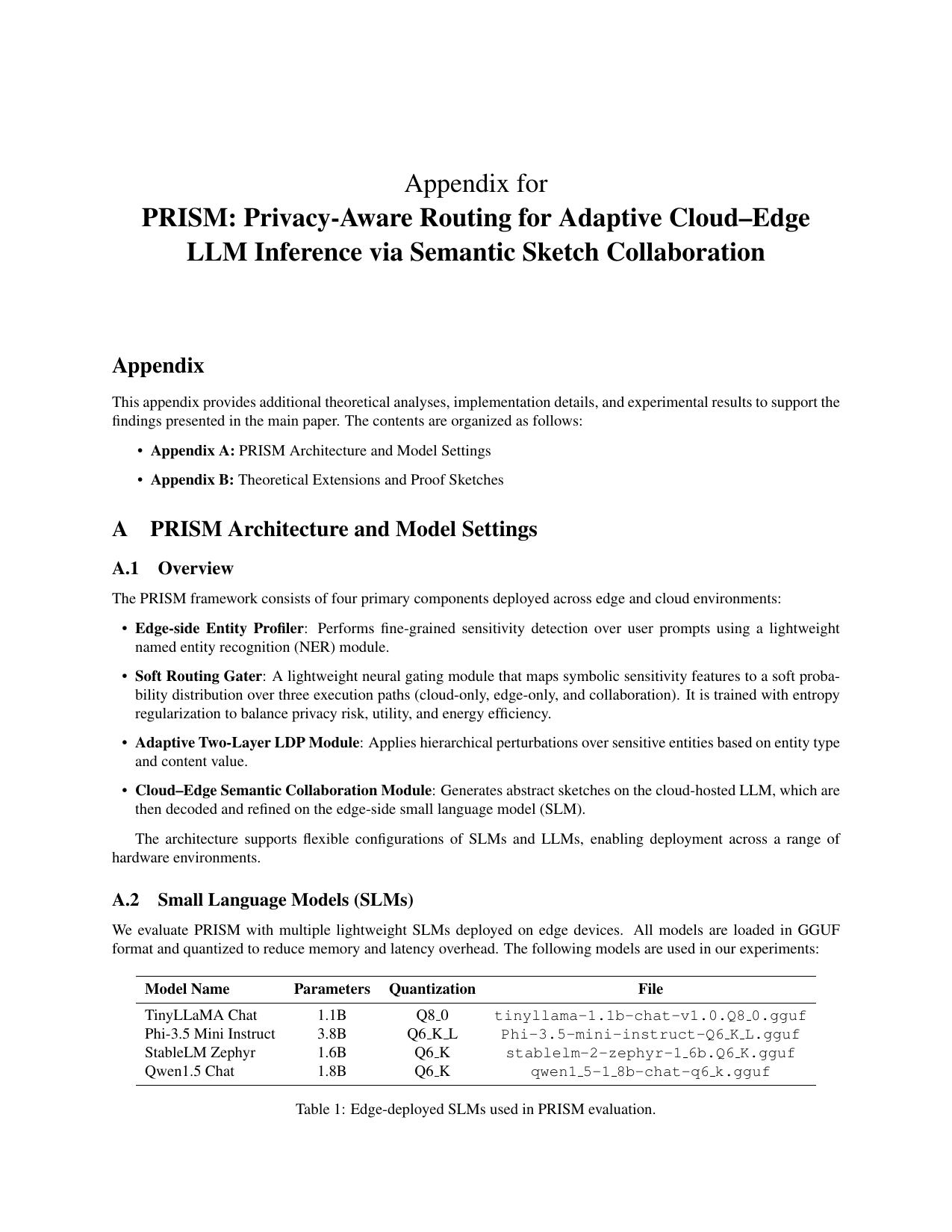}

\end{document}